\author[1]{Gianira N. Alfarano\thanks{G.~N.~Alfarano is supported by Swiss Science Foundation grant n.~188430.}}
\affil[1]{University of Zurich, Switzerland}
\author[2]{Alberto Ravagnani\thanks{A.~Ravagnani is in part supported by Dutch Research Council grant VI.Vidi.203.045.}}
\affil[2]{Eindhoven University of Technology, the Netherlands}
\author[3]{Emina Soljanin\thanks{E.~Soljanin is in part supported by NSF Award CIF-2122400.}}
\affil[3]{Rutgers University, NJ, USA}
\definecolor{myg}{RGB}{220,220,220}
\theoremstyle{definition}
\newtheorem{theorem}{Theorem}[section]
\newtheorem{corollary}[theorem]{Corollary}
\newtheorem{proposition}[theorem]{Proposition}
\newtheorem{lemma}[theorem]{Lemma}
\newtheorem{definition}[theorem]{Definition}
\newtheorem{example}[theorem]{Example}
\newtheorem{notation}[theorem]{Notation}
\newtheorem{remark}[theorem]{Remark}
\newcommand*{\myproofname}{Proof of the claim}
\newcommand{\numberset}{\mathbb}
\newcommand{\R}{\numberset{R}}
\newcommand{\F}{\numberset{F}}
\newcommand{\mC}{\mathcal{C}}
\newcommand{\mR}{\mathcal{R}}
\newcommand{\wt}{\textnormal{wt}}
\newlength{\mynodespace}
\newcommand{\all}{\textnormal{all}}
\begin{document}

\title{\textbf{Dual-Code Bounds on Multiple Concurrent \\ (Local) Data Recovery}}

\date{}






\maketitle

\begin{abstract}
We are concerned with linear redundancy storage schemes regarding their ability to provide concurrent (local) recovery of multiple data objects. This paper initiates a study of such systems within the classical coding theory. We show how we can use the structural properties of the generator matrix defining the scheme to obtain a bounding polytope for 
the set of data access rates the system can support. 
We derive two dual distance outer bounds, which are sharp for some large classes of matrix families.
\end{abstract}

%

\bigskip

\section{Introduction}
Distributed computing systems rely on their storage layers to provide data access services for executing applications. Thus, the system's overall performance depends on the underlying storage system's data access performance.
Distributed storage systems strive to maximize the number of concurrent data access requests they can support with fixed resources. Replicating data objects according to their relative popularity and access volume helps achieve this goal. However, these quantities are often unpredictable. In emerging applications such as edge computing, the expected number of users and their data interests fluctuate, and data storage schemes should support such dynamics \cite{Edge:YadgarKAS19}. Erasure-coding has emerged as an efficient and robust form of redundant storage, which can 
 flexibly handle skews in the request rates.
 
Recent work on redundant distributed storage access 
introduced the notion of the {\it service rate region} of a redundancy scheme that includes all data access requests that the system can serve
\cite{service:aktas2021jkks,service:KazemiKSS21g,service:KazemiKSS20g,service:KazemiKSS20c,service:AndersonJJ18,service:AktasAJ17}.
To understand this concept, consider a distributed system that stores $k$ different data objects by encoding them into $n$ and storing the $n$ coded objects on $n$ different nodes. Each of the $n$ nodes can serve requests at a rate $\mu$ (i.e., has service capacity $\mu$). The system can serve requests to access the $k$ data objects that arrive at rates $\lambda_1$, $\lambda_2$, \dots , $\lambda_k$ if each request can be routed to a group of nodes that can jointly fulfill the request, and the total request rate allocated to each node does not exceed its service capacity $\mu$. We call the set of such request vectors ($\lambda_1$, $\lambda_2$, \dots , $\lambda_k$) the \emph{service rate region} of a coded distributed system.

 The overview paper \cite{service:aktas2021jkks} postulated the service rate region as an important consideration in the design of erasure-coded distributed systems. It highlights several open problems that can be grouped into two broad threads: 1) characterizing the service rate region of a given code and finding the optimal request allocation, and 2) designing the underlying erasure code for a given service rate region. The paper argued that the presented problems not only require expertise from different areas, but have also already been addressed in those areas in some special forms and under different names. Moreover, it explained how some problems associated with the service rate region generalize previously studied distributed problems such as batch codes, codes with locality and availability, and private information retrieval \cite{BatchCodesAndTheirApps:IshaiKO04,Batch:RietST18,  batchPIR:Skachek18, PIR:Fazeli15-1,PIR:Fazeli15-2,Ernvall:14,Ernvall:14a,Gopalan:14,OptimalLRC}. 
 
 The numerous open problems described in \cite{service:aktas2021jkks} (in both groups mentioned above) could be seen as performance analysis and networking problems as well as coding theory and data allocation problems. These problems could be addressed by a wide variety of scientists according to their interests and expertise. 
 \\[0.5ex]
 \textbf{Our contribution.}
 The goal of this paper is to initiate a study of redundancy schemes within the classical coding theory. 
 We focus on characterizing the service rate region of a storage scheme defined by a rank~$k$ generator matrix $G$, which is
 a convex polytope in~$\R^k$.
 Our ultimate goal is
 to establish a series of inequalities that, when combined, cut out the service region.
 This paper makes a first step towards this goal by showing 
 how some structural properties of~$G$ 
 can be used to find a 
 polytope that contains the service rate region, giving an \textit{outer bound} for the latter. More precisely, 
 we establish a Total Capacity Bound for the service rate region determined by a matrix~$G$, as well as two Dual Distance Bounds that take into account different structural properties of~$G$.
 We also show that the {\it bounding} polytope we find coincides with the service rate region in some special cases, e.g.
 for some large classes of MDS codes. 

This paper is organized as follows. Sec.~\ref{sec:ii} defines the problem. Sec.~\ref{sec:iii} shows some properties of the recovery sets of a linear redundancy scheme and introduces a way to compare storage schemes.
Sec.~\ref{sec:iv} derives the Total Capacity Bound for the service rate region and two Dual Distance Bounds. Sec.~\ref{sec:v} outlines 
future work plans.

\section{Distributed Coded Recovery Systems}
\label{sec:ii}

In this section we establish the notation for 
the rest of the paper, define distributed coded systems and their service rate region. 

\begin{notation}
Throughout the paper, $\F_q$ denotes the finite field with $q$ elements where $q$ is a prime power.
We work with integers $n > k \ge 2$, a real number $\mu \ge 1$,
and a fixed matrix
$G \in \F_q^{k \times n}$ of rank $k$. We assume that $G$ has no all-zero column and
denote by~$G^j$ 
its $j$th column.
\end{notation}

We consider coded distributed systems where~$k$ data objects are linearly encoded into $n$ objects stored on $n$ servers. Each server stores exactly one object and the objects are elements of $\F_q$. 

Such a distributed coded system is fully specified by a rank $k$ matrix $G \in \F_q^{k \times n}$, which we call the \textbf{generator matrix} of the system. If $(x_1,\dots,x_k) \in \F_q^k$ is the $k$-tuple of objects to be stored,
then the $j$th server stores the $j$th component of the vector
$$(x_1,\dots,x_k) \cdot G \in \F_q^n.$$
We say
that the matrix $G$ is \textbf{systematic} if its first $k$ columns form the identity $k \times k$ matrix. 

We consider $k$-tuples $(\lambda_1,\ldots,\lambda_k) \in \R^k$ of rate requests to access the $k$ objects. More precisely, $\lambda_i$ is the rate request for the $i$th object.
Each of the $n$ servers can serve request at the rate of at most $\mu \ge 1$. The parameter $\mu$ is called the server's \textbf{capacity}. 

Each user gets assigned to a set of servers that, together, allow recovering the desired object. An object can be recovered from different server sets, which motivates the following terminology.

\begin{definition} \label{maindef}
For $i \in \{1,\dots,k\}$, let
$$\mR^\all_i:=\{R \subseteq \{1,\dots,n\} \mid 
e_i \in \langle G^j \mid j \in R\rangle\},$$
where $e_i \in \F_q^k$ denotes the $i$th
standard basis vector and $\langle G^j \mid j \in R\rangle$ is the span of the columns of~$G$ indexed by $R$. The elements of $\mR^\all_i$ are the \textbf{recovery sets} for the $i$th object.
\end{definition}

Note that, in the above definition, we have 
$\mR^\all_i \neq \emptyset$ for all $i \in \{1,\ldots,k\}$. This is a simple consequence of the fact that $G$ has rank $k$. Moreover, $R \neq \emptyset$ for all $i\in\{1,\dots,k\}$ and $R \in \mR^\all_i$. We use the superscript
``all'' to indicate that $\mR_i^\all$ contains \textit{all} the recovery sets for the symbol $i$.

\begin{example}\label{ex:MDScode}
Let $k=2$, $n=4$, $q=3$ and 
$$G := \begin{pmatrix}
1 & 0 & 1 & 1 \\
0 & 1 & 2 & 1
\end{pmatrix}\in\F_3^{2\times 4}.$$
 Then, we have 
 \begin{align*}
    \mR^\all_1&=\{\{1\},\{2,3\}, \{2,4\}, \{3,4\}, \{2,3,4\}\},\\
     \mR^\all_2&=\{\{2\},\{1,3\}, \{1,4\}, \{3,4\}, \{1,3,4\}\}.
\end{align*}
\end{example}

When designing a recovery system starting from~$G$, not all recovery sets need to be considered.

\begin{definition}
A \textbf{recovery} $G$-\textbf{system} is a $k$-tuple
$\mR=(\mR_1,\ldots,\mR_k)$ of subsets of $\{1,\ldots,n\}$ with $\mR_i \subseteq \mR_i^{\textnormal{all}}$ and $\mR_i \neq \emptyset$ for all $i \in \{1,\ldots,k\}$.
\end{definition}

The service rate region of a recovery $G$-system $\mR$ is the set of all request rate tuples $(\lambda_1,\ldots,\lambda_k) \in \R^k$ that can be served by the system.

\begin{definition} \label{def:system}
Let $\mR=(\mR_1,\ldots,\mR_k)$ be
a recovery $G$-system.
The \textbf{service rate region} associated with~$\mR$ and~$\mu$ is the set of
all $(\lambda_1,\ldots,\lambda_k) \in \R^k$ for which there exists a collection of real numbers $$\{\lambda_{i,R} \mid i \in \{1,\ldots,k\}, \, R \in \mR_i\}$$
with the following properties:
\begin{empheq}[left = \empheqlbrace]{alignat=3}
        \sum_{R \in \mR_i}\lambda_{i,R} &= \lambda_i &&\textnormal{ for } 1\leq i\leq k, \label{c1}\\
        \sum_{i=1}^k \sum_{\substack{R \in \mR_i \\ j \in R}} \lambda_{i,R} &\leq \mu  \ &&\textnormal{ for } 1\leq j\leq n, \label{c2} \\
        \lambda_{i,R} &\geq 0  \ &&\textnormal{ for } 1\leq i\leq k, \, R \in \mR_i. \label{c3}
    \end{empheq}
A collection $\{\lambda_{i,R}\}$
of real numbers that satisfy the above three properties is called a  \textbf{feasible allocation} for $(\mR,\mu)$. The service rate region associated with $\mR$ and $\mu$ is denoted by $$\Lambda(\mR,\mu) \subseteq \R^k.$$
\end{definition}

\begin{remark}
We have 
$\Lambda(\mR,\mu) = \mu \cdot \Lambda(\mR,1)$ 
for any recovery $G$-system~$\mR$. Therefore, we will simply call $\Lambda(\mR,1)$ the \textbf{service rate region} of $G$.
\end{remark}

 Our main goal is to establish outer bounds for the service rate region $\Lambda(\mR,1)$ of a recovery $G$-system~$\mR$, in the form of a \textit{bounding} polytope.

\begin{example} \label{ex:cont}
Let $G$ be as in Example~\ref{ex:MDScode}.
Corollary~\ref{cor:MDSregion} will give us the bounding polytope 
for $\Lambda(\mR^\textnormal{all},1)$ depicted in Figure~\ref{fig:4_2_rate_region}.
The outer bound is sharp.
\begin{figure}[hbt]
\begin{center}
\includegraphics[scale=0.7]{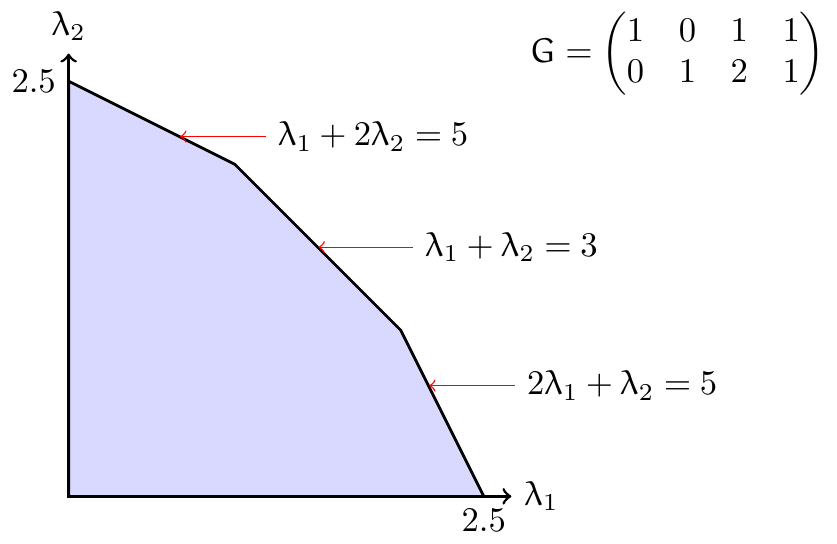}
\end{center}
\caption{Service rate region for the $G$-system in Example~\ref{ex:cont}. The lines bounding the polytope are obtained from Corollary~\ref{cor:MDSregion}. \label{fig:4_2_rate_region}}
\end{figure}
\end{example}

\section{Minimal Recovery Sets}
\label{sec:iii}

A natural question is how the service rate region changes when changing the recovery $G$-system.
An immediate observation is the following. The result easily follows from Definition~\ref{def:system} and is therefore left to the reader.

\begin{proposition}\label{prop:subsystem}
Suppose that $\mR=(\mR_1,\ldots,\mR_k)$ and 
$\mR'=(\mR'_1,\ldots,\mR'_k)$ are recovery $G$-systems with $\mR'_i \subseteq \mR_i$ for all 
$i \in \{1,\ldots,k\}$. Then
$\Lambda(\mR,1) \supseteq \Lambda(\mR',1)$. In particular,
$\Lambda(\mR,\mu) \subseteq 
    \Lambda(\mR^\all,\mu)$
for any recovery $G$-system $\mR$.
\end{proposition}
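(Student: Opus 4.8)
The plan is to prove the first inclusion by an ``extension by zero'' argument on feasible allocations, and then obtain the ``in particular'' clause by specializing and rescaling. First I would take an arbitrary point $(\lambda_1,\ldots,\lambda_k) \in \Lambda(\mR',1)$ and fix a feasible allocation $\{\lambda'_{i,R} \mid i \in \{1,\ldots,k\},\, R \in \mR'_i\}$ witnessing this, i.e.\ satisfying \eqref{c1}--\eqref{c3} with $\mu=1$ and with the index sets $\mR'_i$.

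Next I would define a candidate allocation for $(\mR,1)$ by setting $\lambda_{i,R} := \lambda'_{i,R}$ when $R \in \mR'_i$ and $\lambda_{i,R} := 0$ when $R \in \mR_i \setminus \mR'_i$; this makes sense precisely because $\mR'_i \subseteq \mR_i$. I would then verify the three defining conditions one at a time. For \eqref{c1}, the sum $\sum_{R \in \mR_i} \lambda_{i,R}$ equals $\sum_{R \in \mR'_i} \lambda'_{i,R} = \lambda_i$, since the added terms vanish. For \eqref{c2}, for each server $j$ the quantity $\sum_{i=1}^k \sum_{R \in \mR_i,\, j \in R} \lambda_{i,R}$ again equals the corresponding sum over $\mR'_i$, which is $\le 1$ by feasibility of $\{\lambda'_{i,R}\}$. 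Non‑negativity \eqref{c3} is immediate since each $\lambda_{i,R}$ is either one of the $\lambda'_{i,R} \ge 0$ or equal to $0$. Hence $\{\lambda_{i,R}\}$ is a feasible allocation for $(\mR,1)$ and $(\lambda_1,\ldots,\lambda_k) \in \Lambda(\mR,1)$, giving $\Lambda(\mR,1) \supseteq \Lambda(\mR',1)$.

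For the final assertion I would observe that, by definition of a recovery $G$-system, $\mR_i \subseteq \mR_i^{\all}$ for every $i$, so the inclusion just proved (applied with $\mR^{\all}$ in the role of $\mR$ and $\mR$ in the role of $\mR'$) yields $\Lambda(\mR^{\all},1) \supseteq \Lambda(\mR,1)$. Scaling both sides by $\mu$ and invoking the identity $\Lambda(\mathcal{S},\mu) = \mu \cdot \Lambda(\mathcal{S},1)$ from the preceding remark (valid for any recovery $G$-system $\mathcal{S}$) then gives $\Lambda(\mR,\mu) \subseteq \Lambda(\mR^{\all},\mu)$.

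There is no real obstacle here: the argument is a routine check that padding a feasible allocation with zeros preserves feasibility, the only point requiring (trivial) attention being that all three constraints \eqref{c1}--\eqref{c3} are individually unaffected by the added zero entries. The rescaling step relies on the already‑stated linearity of $\Lambda$ in $\mu$, so nothing further is needed.
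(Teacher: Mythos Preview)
Your proposal is correct and is exactly the routine verification the paper has in mind: the paper does not write out a proof at all, simply noting that the result ``easily follows from Definition~\ref{def:system} and is therefore left to the reader.'' Your extension-by-zero argument, followed by the specialization to $\mR \subseteq \mR^{\all}$ and the rescaling via $\Lambda(\cdot,\mu)=\mu\cdot\Lambda(\cdot,1)$, is precisely the intended check.
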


The service rate region, however, does not change when selecting from $\mR^{\textnormal{all}}$ the recovery sets that are 
minimal with respect to inclusion, in the following precise sense.

\begin{definition}
A set $R \in \mR^\all_i$ is 
\textbf{$i$-minimal} if there is no $R' \in \mR^\all_i$ with $R' \subsetneq R$. We define~$\mR^{\textnormal{min}}$ as the recovery $G$-system defined, for all $i$, by
$$\mR_i^{\textnormal{min}}:=\{R \in \mR^\all_i \mid R \mbox{ is $i$-minimal}\}.$$
\end{definition}

\begin{proposition}\label{lostesso}
We have
$\Lambda(\mR^{\textnormal{min}},1) = \Lambda(\mR^\all,1)$.
\end{proposition}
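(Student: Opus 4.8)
The plan is to prove the two inclusions separately, the non-trivial one being $\Lambda(\mR^\all,1) \subseteq \Lambda(\mR^{\textnormal{min}},1)$ since the reverse inclusion is immediate from Proposition~\ref{prop:subsystem} (indeed $\mR_i^{\textnormal{min}} \subseteq \mR_i^\all$ for all $i$). So the core of the argument is to show that any rate vector servable using all recovery sets can already be served using only the $i$-minimal ones.

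First I would fix $(\lambda_1,\dots,\lambda_k) \in \Lambda(\mR^\all,1)$ together with a feasible allocation $\{\lambda_{i,R} \mid i \in \{1,\dots,k\},\ R \in \mR_i^\all\}$ satisfying \eqref{c1}--\eqref{c3}. The key observation is that every $R \in \mR_i^\all$ contains some $i$-minimal set: since $\mR_i^\all$ is a finite collection of finite subsets of $\{1,\dots,n\}$, we may pick, for each $R \in \mR_i^\all$, a set $\phi_i(R) \subseteq R$ that is $i$-minimal (choose a minimal element of $\{R' \in \mR_i^\all \mid R' \subseteq R\}$, which is nonempty as it contains $R$). Then I would define a new allocation supported on minimal sets by pushing mass along $\phi_i$:
\[
\lambda'_{i,R'} := \sum_{\substack{R \in \mR_i^\all \\ \phi_i(R) = R'}} \lambda_{i,R} \qquad \text{for } R' \in \mR_i^{\textnormal{min}}.
\]
It remains to check that $\{\lambda'_{i,R'}\}$ is a feasible allocation for $(\mR^{\textnormal{min}},1)$ achieving the same $(\lambda_1,\dots,\lambda_k)$. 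Nonnegativity \eqref{c3} is clear since it is a sum of nonnegative terms. Condition \eqref{c1} holds because $\sum_{R' \in \mR_i^{\textnormal{min}}} \lambda'_{i,R'} = \sum_{R \in \mR_i^\all} \lambda_{i,R} = \lambda_i$, the map $\phi_i$ partitioning $\mR_i^\all$ by its fibers. For the capacity constraint \eqref{c2}, the point is that $j \in R'$ and $\phi_i(R) = R'$ force $j \in R' \subseteq R$; hence for each server $j$,
\[
\sum_{i=1}^k \sum_{\substack{R' \in \mR_i^{\textnormal{min}} \\ j \in R'}} \lambda'_{i,R'}
= \sum_{i=1}^k \sum_{\substack{R' \in \mR_i^{\textnormal{min}} \\ j \in R'}} \sum_{\substack{R \in \mR_i^\all \\ \phi_i(R) = R'}} \lambda_{i,R}
\leq \sum_{i=1}^k \sum_{\substack{R \in \mR_i^\all \\ j \in R}} \lambda_{i,R} \leq \mu,
\]
where the first inequality uses that every pair $(i,R)$ appearing on the left has $j \in R$ and is counted at most once (namely in the fiber of $\phi_i(R)$), so the double sum on the left is a subsum of the double sum in the middle. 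This shows $(\lambda_1,\dots,\lambda_k) \in \Lambda(\mR^{\textnormal{min}},1)$ and completes the inclusion.

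The only subtle point — and the step I would be most careful about — is the verification that the terms on the left-hand side of the capacity inequality form a genuine subfamily of the terms on the right-hand side, i.e.\ that no $\lambda_{i,R}$ is double-counted and that each contributing $R$ really does contain $j$. Both are handled by the chain $j \in R' = \phi_i(R) \subseteq R$ and by the fact that $\phi_i$ is a well-defined function, so its fibers are disjoint. Everything else is bookkeeping with finite sums, which is why Proposition~\ref{prop:subsystem} dispatches one direction and this ``retraction onto minimal sets'' argument dispatches the other.
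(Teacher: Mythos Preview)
Your proof is correct and follows essentially the same route as the paper: push the mass of each $\lambda_{i,R}$ onto a chosen $i$-minimal subset via a retraction map (your $\phi_i$ is the paper's $\min_i$), then check \eqref{c1}--\eqref{c3}. If anything, you supply more detail in the verification than the paper does; the only cosmetic slip is writing $\le \mu$ in the final display where $\le 1$ is meant.
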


Therefore, when studying the service rate region of the system defined by $G$ one may or may not restrict to the minimal recovery sets.

\begin{proof}[Proof of Proposition~\ref{lostesso}]
The inclusion ``$\subseteq$'' 
follows directly from Proposition~\ref{prop:subsystem}.
To prove the inclusion ``$\supseteq$'', let $(\lambda_1,\ldots,\lambda_k)\in\Lambda(\mR^\all,1)$. By definition, there exist a collection of real numbers 
$$ \{\lambda_{i,R} \mid i \in \{1,\ldots,k\}, \, R \in \mR^\all_i\},$$
satisfying \eqref{c1}--\eqref{c3}. We will show
that $(\lambda_1,\ldots,\lambda_k)\in\Lambda(\mR^{\textnormal{min}},1)$. 
In order to do so, for each $i\in\{1,\ldots,k\}$
and $R\in\mR^\all_i$ we fix a set 
$\mbox{min}_i(R) \in \mR_i^{\textnormal{min}}$
such that $R \supseteq \mbox{min}_i(R)$.
For $Q\in\mR_i^{\textnormal{min}}$ define 
$$\lambda_{i,Q}:=\sum_{\substack{R \in \mR^\all_i \\ \textnormal{min}_i(R)=Q}}\lambda_{i,R}.$$
Then 
the numbers $\{\lambda_{i,Q} \mid i \in \{1,\ldots,k\}, \, Q \in \mR_i^{\textnormal{min}}\}$ satisfy~\eqref{c1}--\eqref{c3}. 
\end{proof}

\section{A Total Capacity Bound}\label{sec:iv}

We here put forward a simple but powerful idea
to obtain outer bounds 
for the service rate region of a coded system.
As an application of this idea, we derive two dual-distance-type bounds in this context.

\begin{lemma}[Total Capacity Bound] \label{tcb}
Let $\mR$ be a recovery $G$-system and let $\{\lambda_{i,R}\}$ be a feasible allocation for $(\mR,1)$.
We have 
\begin{align} \label{new1}
\sum_{i=1}^k \sum_{R\in\mR_i} |R|   \lambda_{i,R} \le n.
\end{align}
\end{lemma}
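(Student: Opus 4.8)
The plan is to prove the bound by summing the capacity constraints \eqref{c2} over all servers $j \in \{1,\ldots,n\}$ and recognizing that this produces the left-hand side of \eqref{new1}. Concretely, for each $j$ the constraint \eqref{c2} reads $\sum_{i=1}^k \sum_{R \in \mR_i,\, j \in R} \lambda_{i,R} \le \mu = 1$. Adding these $n$ inequalities gives
\[
\sum_{j=1}^n \sum_{i=1}^k \sum_{\substack{R \in \mR_i \\ j \in R}} \lambda_{i,R} \le n.
\]
The key manipulation is to interchange the order of summation on the left: instead of summing over servers $j$ and then over the recovery sets containing $j$, sum over all pairs $(i,R)$ with $R \in \mR_i$ and then over the servers $j \in R$. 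Each term $\lambda_{i,R}$ is thereby counted exactly $|R|$ times, once for every server index it contains, so the left-hand side equals $\sum_{i=1}^k \sum_{R \in \mR_i} |R|\, \lambda_{i,R}$.

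Putting these two observations together yields \eqref{new1} immediately. I would also remark that nonnegativity \eqref{c3} is not even needed for this particular bound — it is a pure consequence of \eqref{c2} alone — although it is what makes the bound meaningful as an outer bound on the rate region (together with \eqref{c1}, which lets one translate the statement into a constraint on the $\lambda_i$ themselves in later corollaries).

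There is essentially no obstacle here: the only thing to be careful about is the double-counting bookkeeping in the Fubini-type exchange of sums, i.e.\ verifying that the index set $\{(i,R,j) : 1 \le i \le k,\ R \in \mR_i,\ j \in R\}$ is the same whether one iterates $j$ first or $(i,R)$ first. This is routine. The substantive content of the lemma is not its proof but its use: it is the single scalar inequality from which the Total Capacity Bound on $(\lambda_1,\ldots,\lambda_k)$ and, after incorporating finer structural information about the columns of $G$, the Dual Distance Bounds will be extracted in the sequel.
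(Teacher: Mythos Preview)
Your proposal is correct and follows exactly the paper's approach: the paper's proof is the single sentence ``The bound is obtained by summing the inequalities in~\eqref{c2}, for $1 \le j \le n$,'' and your argument just spells out the interchange of summation that makes this work.
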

\begin{proof}
The bound is obtained by summing  the inequalities in \eqref{c2}, for $1 \le j \le n$,
\end{proof}

\begin{remark}
The Total Capacity Bound of Lemma~\ref{tcb} put in evidence the cardinality of the recovery sets.
It can be applied directly when a lower bound for this quantity is known. More precisely, if every recovery set of a $G$-system $\mR$ has size at least~$M$, then
every $(\lambda_1,\ldots,\lambda_k) \in \Lambda(\mR,1)$ satisfies
$$\lambda_1+\cdots + \lambda_k \leq n/M.$$ 
\end{remark}


Lemma~\ref{tcb} motivates us
to determine
the properties of $G$ that determine 
the cardinalities 
of the recovery sets,
and how they shape
the service rate region. This is the program we initiate in this paper.

\subsection{First Dual Distance Bound}
We next establish a connection between the recovery sets of a systematic coded system and the codewords of the code whose parity-check matrix is $G$. By combining this with the Total Capacity Bound, we obtain our First Dual Distance Bound.

\begin{notation}
In the sequel, we denote by 
$\mC$ the linear code generated by $G$, and by $\mC^\perp$, its dual code. Note that $\mC$
is an $[n,k]_q$ code. We let $d^\perp$ be the minimum distance of $\mC^\perp$.
Recall that the (\textbf{Hamming}) \textbf{support} of a vector $x\in\F_q^n$ is $\sigma(x):=\{1 \le i \le n \mid x_i \ne 0\}$.
\end{notation}

\begin{proposition}\label{prop:dualcodewords}
Suppose that $G$ is systematic.
 Let $R \subseteq \{1,\dots,n\}$ and $i \in \{1,\dots,k\}$. Then $R \in \mR^\all_i$ if and only if $R=\{i\}$ or there exists a codeword $x \in \mC^\perp$ with 
 $\sigma(x)\subseteq R \cup \{i\}$
 and $i \in \sigma(x)$.
\end{proposition}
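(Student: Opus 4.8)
The plan is to turn the condition ``$R\in\mR^\all_i$'' into a linear-dependence statement among the columns of $G$, and then to recognise such a dependence as a parity-check relation, i.e.\ as a codeword of $\mC^\perp$. Two observations drive the whole argument. First, since $G$ is systematic, $G^i=e_i$ for every $i\in\{1,\dots,k\}$. Second, $G$ is a parity-check matrix of $\mC^\perp$, so $x\in\F_q^n$ lies in $\mC^\perp$ exactly when $\sum_{j=1}^n x_jG^j=0$. With this dictionary the claim reads: $e_i$ lies in $\langle G^j\mid j\in R\rangle$ if and only if $R=\{i\}$ or there is a dependence among $\{G^j\mid j\in R\cup\{i\}\}$ whose coefficient at index $i$ is nonzero. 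The case $R=\{i\}$ is immediate in both directions from $e_i=G^i\in\langle G^i\rangle$, so I would assume $R\neq\{i\}$ from here on.

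For the ``if'' direction, take $x\in\mC^\perp$ with $\sigma(x)\subseteq R\cup\{i\}$ and $i\in\sigma(x)$. Splitting off the $i$-th term of $\sum_{j=1}^n x_jG^j=0$ yields
\[
x_i\,e_i \;=\; x_i\,G^i \;=\; -\sum_{j\in\sigma(x)\setminus\{i\}} x_j\,G^j .
\]
Since $\sigma(x)\setminus\{i\}\subseteq R$ and $x_i$ is invertible in $\F_q$, this exhibits $e_i$ as an $\F_q$-linear combination of $\{G^j\mid j\in R\}$, so $R\in\mR^\all_i$.

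For the ``only if'' direction, let $R\in\mR^\all_i$ and write $e_i=\sum_{j\in R}c_jG^j$ with $c_j\in\F_q$. I would set $x_j:=c_j$ for $j\in R\setminus\{i\}$, put $x_i:=c_i-1$ (reading $c_i:=0$ when $i\notin R$), and $x_j:=0$ otherwise; then $\sum_j x_jG^j=\big(\sum_{j\in R}c_jG^j\big)-G^i=e_i-e_i=0$, so $x\in\mC^\perp$ and $\sigma(x)\subseteq R\cup\{i\}$. The step I expect to require the most care is showing $i\in\sigma(x)$, i.e.\ $x_i\neq 0$. When $i\notin R$ this is automatic, since $x_i=-1$; note that this already handles every recovery set not containing $i$, hence in particular every $i$-minimal set other than $\{i\}$ itself (for if $i\in R$ and $R$ is $i$-minimal, then $\{i\}\subseteq R$ forces $R=\{i\}$). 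When $i\in R$ one must instead choose the representation of $e_i$ so that $c_i\neq 1$; this can be done precisely when the $i$-th coordinate is not identically zero on the subcode $\{x\in\mC^\perp\mid \sigma(x)\subseteq R\}$, equivalently when $e_i\in\langle G^j\mid j\in R\setminus\{i\}\rangle$, which reduces matters to the case already treated. The two implications themselves are one-line column manipulations once the correspondence ``column dependence $\leftrightarrow$ dual codeword'' is in place; the genuine work is this bookkeeping about the index $i$. Finally, since any $x$ obtained in the ``only if'' direction is a nonzero element of $\mC^\perp$, it has weight at least $d^\perp$, and $\sigma(x)\subseteq R\cup\{i\}$ gives $|R\cup\{i\}|\ge d^\perp$ — the inequality that later powers the First Dual Distance Bound.
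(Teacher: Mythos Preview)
Your approach is the same as the paper's: read ``$e_i\in\langle G^j\mid j\in R\rangle$'' as a column dependence and recognise that dependence as a codeword of $\mC^\perp$. The ``if'' direction and the case $i\notin R$ of the ``only if'' direction are clean and match the paper's (very terse) argument.

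The step you single out as needing care is a genuine gap, and in fact it cannot be closed: the statement is false as written, and the paper's own proof glosses over the same point. When $i\in R$ and $R\neq\{i\}$, your equivalence ``one can choose $c_i\neq 1$ precisely when $e_i\in\langle G^j\mid j\in R\setminus\{i\}\rangle$'' is correct, but you never argue that this last condition holds, and it need not. Take $G=\left(\begin{smallmatrix}1&0&1\\0&1&1\end{smallmatrix}\right)\in\F_2^{2\times 3}$, $i=1$, $R=\{1,2\}$: then $R\in\mR^\all_1$ and $R\neq\{1\}$, yet $\mC^\perp=\{0,(1,1,1)\}$ contains no nonzero vector supported in $R\cup\{1\}=\{1,2\}$, so no $x$ as in the proposition exists. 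Your own parenthetical remark already points to the repair: for $i$-minimal $R$ the situation $i\in R$, $R\neq\{i\}$ is impossible, so the equivalence does hold on $\mR_i^{\textnormal{min}}$, and by Proposition~\ref{lostesso} this is all that is needed for Corollary~\ref{cor:size_recovery} (restricted to minimal $R$) and for the First Dual Distance Bound. The honest fix is to add the hypothesis $i\notin R$, or to restrict the statement to $\mR_i^{\textnormal{min}}$.
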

\begin{proof} 
By definition, $R \in \mR^\all_i$ if and only if 
the span of the columns of $G$ indexed by $R\cup \{i\}$
contains $e_i$. This happens if and only if 
$R=\{i\}$ or there exists a linear combination of the columns of~$G$ indexed by~$R$, where the $i$th column of~$G$ is taken with a nonzero coefficient, that gives zero. The latter condition is equivalent to the existence of a codeword $x \in \mC^\perp$ with $i \in \sigma(x)$ and $\sigma(x) \subseteq R$.
\end{proof}


\begin{corollary}\label{cor:size_recovery}
If $G$ is systematic,
$i\in\{1,\ldots, k\}$ and 
$R\in\mR^\all_i$, then $R=\{i\}$ or $|R|\geq d^\perp-1$.
\end{corollary}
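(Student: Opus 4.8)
The plan is to derive this as an immediate consequence of Proposition~\ref{prop:dualcodewords} together with the defining property of the dual minimum distance $d^\perp$. Suppose $G$ is systematic, fix $i \in \{1,\ldots,k\}$ and $R \in \mR^\all_i$, and assume $R \neq \{i\}$. By Proposition~\ref{prop:dualcodewords}, there exists a codeword $x \in \mC^\perp$ with $i \in \sigma(x)$ and $\sigma(x) \subseteq R \cup \{i\}$. Since $i \in \sigma(x)$, this codeword is nonzero, so $|\sigma(x)| \geq d^\perp$ by definition of the minimum distance of $\mC^\perp$.

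Next I would translate the support bound into a bound on $|R|$. From $\sigma(x) \subseteq R \cup \{i\}$ we get $|R \cup \{i\}| \geq |\sigma(x)| \geq d^\perp$, hence $|R| \geq |R \cup \{i\}| - 1 \geq d^\perp - 1$. This uses only that removing a single element from a set decreases its cardinality by at most one, so the argument goes through whether or not $i \in R$. That completes the proof.

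I do not anticipate any real obstacle here: the corollary is essentially a reformulation of Proposition~\ref{prop:dualcodewords} once one recalls that a nonzero codeword of $\mC^\perp$ has Hamming weight at least $d^\perp$. The only point worth stating carefully is that the $R = \{i\}$ case must be excluded, since a recovery set consisting of a single systematic coordinate need not contain the support of any dual codeword and can be arbitrarily small (indeed of size $1$). One could optionally remark that the bound is attained precisely when $\mC^\perp$ has a minimum-weight codeword whose support contains $i$, which links small recovery sets to the weight distribution of the dual code and motivates the Total Capacity Bound application that follows.
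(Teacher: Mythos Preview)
Your argument is correct and is exactly the intended derivation: the paper states this result as an immediate corollary of Proposition~\ref{prop:dualcodewords} without giving a separate proof, and your reasoning (a nonzero dual codeword has weight at least $d^\perp$, hence $|R\cup\{i\}|\ge d^\perp$) is precisely how one fills in the omitted step.
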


We are now ready to present the First Dual Distance Bound. 

\begin{theorem}[First Dual Distance Bound]\label{thm:bounddualdis}
Suppose that~$G$ is systematic and let $\mR$ be a recovery $G$-system.
If $(\lambda_1,\dots,\lambda_k) \in \Lambda(\mR,1)$, then
$$\sum_{i=1}^k \Bigl( \min\{\lambda_i,1\} + (d^\perp-1) \max\{0,\lambda_i-1\}  \Bigr) \le n.$$
\end{theorem}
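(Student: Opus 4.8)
The plan is to combine the Total Capacity Bound (Lemma~\ref{tcb}) with the size estimate of Corollary~\ref{cor:size_recovery}, after isolating the contribution of the singleton recovery sets $\{i\}$. First I would fix a point $(\lambda_1,\dots,\lambda_k)\in\Lambda(\mR,1)$ together with a feasible allocation $\{\lambda_{i,R}\}$ for $(\mR,1)$; constraints \eqref{c1} and \eqref{c3} force $\lambda_i\ge 0$ for every $i$. For each $i\in\{1,\dots,k\}$ I would set $\mu_i:=\lambda_{i,\{i\}}$ if $\{i\}\in\mR_i$ and $\mu_i:=0$ otherwise. Two bounds on $\mu_i$ will be used: trivially $\mu_i\le\lambda_i$ by \eqref{c1}, and --- this is the crucial point --- the capacity constraint \eqref{c2} at server $j=i$ gives $\mu_i=\lambda_{i,\{i\}}\le 1$, because $\lambda_{i,\{i\}}$ is one of the nonnegative summands appearing in that constraint. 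Hence $0\le\mu_i\le\min\{\lambda_i,1\}$.

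Next I would estimate the contribution of each object $i$ to the left-hand side of \eqref{new1}. Since $\mR_i\subseteq\mR^\all_i$, Corollary~\ref{cor:size_recovery} says every $R\in\mR_i$ with $R\ne\{i\}$ satisfies $|R|\ge d^\perp-1$; moreover $d^\perp\ge 2$ because $G$ has no all-zero column. Splitting off the term $R=\{i\}$, which has $|R|=1$, therefore gives
$$\sum_{R\in\mR_i}|R|\,\lambda_{i,R}\ \ge\ \mu_i+(d^\perp-1)(\lambda_i-\mu_i).$$
The right-hand side is an affine function of $\mu_i$ with slope $-(d^\perp-2)\le 0$, hence non-increasing in $\mu_i$; evaluating it at $\mu_i=\min\{\lambda_i,1\}$, the largest value $\mu_i$ can attain, yields exactly $\min\{\lambda_i,1\}+(d^\perp-1)\max\{0,\lambda_i-1\}$ (one checks the cases $\lambda_i\le 1$ and $\lambda_i>1$ separately). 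So object $i$ contributes at least $\min\{\lambda_i,1\}+(d^\perp-1)\max\{0,\lambda_i-1\}$ to the sum in \eqref{new1}.

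Finally I would sum these $k$ inequalities and invoke the Total Capacity Bound \eqref{new1}, which caps the total contribution at $n$; this is precisely the asserted inequality. The only genuinely delicate step is the bound $\lambda_{i,\{i\}}\le 1$: a naive term-by-term comparison without it breaks down, since routing all of $\lambda_i>1$ through the singleton $\{i\}$ would make object $i$'s contribution to \eqref{new1} equal to just $\lambda_i$, which is strictly smaller than the claimed $1+(d^\perp-1)(\lambda_i-1)$ when $d^\perp\ge 2$. It is exactly the capacity of server $i$ itself that forbids such an allocation and makes the bound hold.
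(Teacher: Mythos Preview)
Your proof is correct and follows essentially the same approach as the paper's: both split off the singleton term $\lambda_{i,\{i\}}$, bound the remaining recovery sets using Corollary~\ref{cor:size_recovery}, exploit $d^\perp\ge 2$ together with $\lambda_{i,\{i\}}\le\min\{\lambda_i,1\}$, and then invoke the Total Capacity Bound~\eqref{new1}. The only difference is cosmetic --- you bound each summand first and then sum, whereas the paper sums first and then bounds --- and your treatment is in fact slightly more careful in handling the possibility that $\{i\}\notin\mR_i$.
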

\begin{proof}
Let $(\lambda_1,\dots,\lambda_k) \in \Lambda(\mR,1)$ and let
$\{\lambda_{i,R}\}$ be a feasible allocation.
By Corollary \ref{cor:size_recovery},
for every $i\in\{1,\ldots, k\}$ 
and every set~$R\in\mR_i$ with $R\ne\{i\}$ we have $|R|\geq d^\perp-1$. We can therefore rewrite and bound the LHS of \eqref{new1} as follows:
\begin{align} \allowdisplaybreaks \label{ee2}
\sum_{i=1}^k &\lambda_{i,\{i\}} + \sum_{i=1}^k \sum_{\substack{R\in\mR_i\\ R \ne \{i\}}} |R| \lambda_{i,R} \nonumber\\ &\geq 
\sum_{i=1}^k \lambda_{i,\{i\}} + ( d^\perp-1) \sum_{i=1}^k \sum_{\substack{R\in\mR_i\\ R \ne \{i\}}}\lambda_{i,R}\nonumber\\ &=
 \sum_{i=1}^k \lambda_{i,\{i\}} + ( d^\perp-1) \sum_{i=1}^k (\lambda_i-\lambda_{i,\{i\}})
\nonumber\\ &=
(d^\perp-1) \sum_{i=1}^k\lambda_i-(
d^\perp-2) \sum_{i=1}^k \lambda_{i,\{i\}}.
\end{align}
Since $G$ has no all-zero column, we have $d^\perp \ge 2$. Therefore, since
$\lambda_{i,\{i\}} \le \min\{\lambda_i,1\}$ for all~$i$, we can further say that the right-hand side of~\eqref{ee2} is at least
\begin{align*}
&(d^\perp-1)\sum_{i=1}^k\lambda_i-(d^\perp-2) \sum_{i=1}^k\min\{\lambda_i,1\} \\
&=\sum_{i=1}^k \Bigl( \min\{\lambda_i,1\} + (d^\perp-1)  \max\{0,\lambda_i-1\}  \Bigr),
\end{align*}
which, combined with \eqref{new1},
gives the statement.
\end{proof}

We can apply Theorem~\ref{thm:bounddualdis} to families of systematic codes whose dual distance is known. For example, by considering the class of systematic MDS codes 
we obtain the following result.

\begin{corollary}
\label{cor:MDSregion}
Suppose that $G$ is systematic and that it generates an MDS code.
Then for all
 $(\lambda_1,\dots,\lambda_k) \in \Lambda(\mR,1)$ we have
$$\sum_{i=1}^k \Bigl( \min\{\lambda_i,1\} + k \cdot \max\{0,\lambda_i-1\}  \Bigr) \le n.$$
\end{corollary}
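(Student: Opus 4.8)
The plan is to derive Corollary~\ref{cor:MDSregion} as an immediate specialization of Theorem~\ref{thm:bounddualdis}. The only work is to identify the value of $d^\perp$ when $G$ generates an MDS code, and then substitute.

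First I would recall the relevant facts about MDS codes. Since $G$ generates an $[n,k]_q$ MDS code $\mC$, the dual code $\mC^\perp$ is an $[n,n-k]_q$ MDS code; this is a standard result. Consequently the minimum distance of $\mC^\perp$ meets the Singleton bound with equality, so $d^\perp = n - (n-k) + 1 = k+1$. Hence $d^\perp - 1 = k$.

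Next I would invoke Theorem~\ref{thm:bounddualdis}: since $G$ is systematic by hypothesis and $\mR$ is a recovery $G$-system, every $(\lambda_1,\dots,\lambda_k) \in \Lambda(\mR,1)$ satisfies
$$\sum_{i=1}^k \Bigl( \min\{\lambda_i,1\} + (d^\perp-1)\max\{0,\lambda_i-1\} \Bigr) \le n.$$
Substituting $d^\perp - 1 = k$ yields exactly the claimed inequality
$$\sum_{i=1}^k \Bigl( \min\{\lambda_i,1\} + k\cdot\max\{0,\lambda_i-1\} \Bigr) \le n,$$
which finishes the proof.

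There is essentially no obstacle here: the content is entirely in Theorem~\ref{thm:bounddualdis}, and the corollary is a one-line consequence once the dual distance is computed. The only point requiring a modicum of care is the duality statement for MDS codes — that the dual of an MDS code is again MDS — but this is classical and can simply be cited. I would also remark, if desired, that this bound is tight for the $k=2$, $n=4$ case of Example~\ref{ex:cont}, matching Figure~\ref{fig:4_2_rate_region}, although that verification is separate from the proof of the inequality itself.
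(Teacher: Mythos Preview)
Your proposal is correct and matches the paper's approach exactly: the corollary is stated immediately after Theorem~\ref{thm:bounddualdis} as a direct specialization, with the only input being that the dual of an $[n,k]_q$ MDS code is MDS and hence $d^\perp=k+1$. There is nothing to add.
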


\begin{remark}
The previous corollary on MDS codes is  sharp whenever $k\leq n-k$; see~\cite[Theorem 2]{service:AktasAJ17}.
\end{remark}


\subsection{Second Dual Distance Bound} \label{sec:fundp}
The goal of this subsection is to identify and study 
new, non-classical parameters of the matrix $G$ that play a role in determining the associated service rate region. As an application, we obtain a second outer bound for the service rate region of a coded system,
which refines the First Dual Distance Bound of Theorem~\ref{thm:bounddualdis} under some assumptions.

\begin{notation} \label{keyparam}
In the sequel, for all
$i \in \{1,\ldots,k\}$, we let
$$G_i=\Bigl( G \mid e_i^\top \Bigr) \in \F_q^{k \times (n+1)}$$
be the matrix obtained from $G$ by appending the $i$th standard basis vector as the $(n+1)$-th column.
Then~$G_i$ generates an $[n+1,k]_q$ code, which we denote by $\mC_i \subseteq \F_q^{n+1}$. Its dual is~$\mC_i^\perp$. For $i \in \{1,\ldots,k\}$, we introduce the following sets and parameters:
\begin{align*}
    \Gamma_i &:= \{\wt(x) \mid x \in \mC_i^\perp, \, n+1 \in \sigma(x)\}, \\ 
    \gamma_i &:= |\Gamma_i| ~~\text{and}~~
    \delta_i^1 :=\min (\Gamma_i). 
\end{align*}
Moreover, we let
\begin{equation*}
    \delta_i^2 := 
    \min \{\wt(x) \mid x \in \mC_i^\perp, \, n+1 \in \sigma(x), \wt(x)>\delta_i^1\},
\end{equation*}
with the convention that $\delta_i^2=\delta_i^1$ when $\gamma_i=1$.
Finally, we define
$$\omega_i:=\frac{|\{x\in\mC_i^\perp : n+1\in\sigma(x), \, \wt(x)=\delta_i^1\}|}{q-1}.$$
\end{notation}

The structural parameters defined in Notation~\ref{keyparam} play an essential role in determining the corresponding service rate region.
In this conference paper, we give preliminary evidence of this with a result that 
extends the First Dual Distance Bound of Theorem~\ref{thm:bounddualdis} to possibly non-systematic matrices. 

We start by describing the recovery sets of $\mR^\all$ in terms of the codes $\mC_i$, $i \in \{1,\ldots,k\}$. The proof of the next result is similar to the one of Proposition~\ref{prop:dualcodewords}, and we, therefore, omit it.

\begin{proposition}\label{prop:recoverysize}
Let $R\subseteq\{1,\ldots,n\}$ and $i\in\{1,\ldots,k\}$. Then $R\in\mR^\all_i$ if and only if there is~$x\in\mC_i^\perp$ with
$\sigma(x)\subseteq R\cup\{n+1\}$
and 
$n+1\in\sigma(x)$. In particular, if $R\in\mR^\all_i$ then $|R|\geq \delta_i^1-1$. 
\end{proposition}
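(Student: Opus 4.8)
The plan is to mimic the proof of Proposition~\ref{prop:dualcodewords}, but now working with the augmented matrix $G_i = (G \mid e_i^\top)$ and its code $\mC_i$ instead of with $G$ directly. The key observation is that appending $e_i^\top$ as column $n+1$ lets us encode the recovery condition ``$e_i \in \langle G^j \mid j \in R\rangle$'' as a linear dependence among columns of $G_i$ that genuinely uses column $n+1$.

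First I would unwind Definition~\ref{maindef}: $R \in \mR^\all_i$ means $e_i$ lies in the span of the columns of $G$ indexed by $R$, i.e. there is a vector $v \in \F_q^n$ supported on $R$ with $Gv^\top = e_i^\top$. Equivalently, setting $v_{n+1} := -1$, the vector $(v, -1) \in \F_q^{n+1}$ satisfies $G_i (v,-1)^\top = Gv^\top - e_i^\top = 0$, so $(v,-1) \in \ker G_i = \mC_i^\perp$; moreover $n+1 \in \sigma((v,-1))$ since its last coordinate is $-1 \neq 0$, and $\sigma((v,-1)) \subseteq R \cup \{n+1\}$. Conversely, given $x \in \mC_i^\perp$ with $n+1 \in \sigma(x)$ and $\sigma(x) \subseteq R \cup \{n+1\}$, write $x = (v, c)$ with $c = x_{n+1} \neq 0$; then $0 = G_i x^\top = Gv^\top + c\, e_i^\top$, so $e_i^\top = -c^{-1} G v^\top$ lies in the span of the columns of $G$ indexed by $\sigma(v) \subseteq R$, whence $R \in \mR^\all_i$. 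This establishes the ``if and only if.''

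For the ``in particular'' clause, I would argue as follows: if $R \in \mR^\all_i$, pick the witnessing $x \in \mC_i^\perp$ with $n+1 \in \sigma(x)$ and $\sigma(x) \subseteq R \cup \{n+1\}$. By definition of $\delta_i^1 = \min(\Gamma_i)$ we have $\wt(x) \geq \delta_i^1$. Since $n+1 \in \sigma(x)$ but $n+1 \notin R$, the support of $x$ has at most $|R| + 1$ elements, so $|R| \geq \wt(x) - 1 \geq \delta_i^1 - 1$. (One should note $\Gamma_i \neq \emptyset$ here, so $\delta_i^1$ is well-defined: since $G$ has rank $k$, $e_i$ is always recoverable, e.g. $R = \{i\}$ in the systematic case or more generally from any $R$ with $\mR^\all_i \ni R$, which is nonempty by the remark after Definition~\ref{maindef}.)

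I do not expect a serious obstacle here: the whole argument is a bookkeeping exercise in translating column dependences into dual codewords, exactly parallel to Proposition~\ref{prop:dualcodewords}, with the role of ``column $i$ appears with nonzero coefficient'' now played by ``coordinate $n+1$ lies in the support.'' The only mild subtlety worth a sentence is that unlike the systematic case there is no separate ``$R = \{i\}$'' branch, because the $e_i^\top$ column is now external (index $n+1$) rather than already present among $G^1, \dots, G^n$; this is precisely why the augmented construction $G_i$ is used. I would keep the writeup to a few lines, citing the parallel with Proposition~\ref{prop:dualcodewords} as the paper already proposes to do.
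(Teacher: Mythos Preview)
Your proposal is correct and follows exactly the approach the paper intends: the paper in fact omits the proof entirely, stating only that it is similar to that of Proposition~\ref{prop:dualcodewords}, and your argument is precisely the natural adaptation of that proof to the augmented matrix $G_i$. The write-up is accurate in all details, including the handling of the ``in particular'' clause via $\wt(x)\ge\delta_i^1$ and $|\sigma(x)|\le |R|+1$.
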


Proposition \ref{prop:recoverysize} means that there are~$\omega_1$ sets $R\in\mR^\all_i$ with size $\delta_i^1$. All the other sets have size at least $\delta_i^2$. 

\begin{theorem}[Second Dual Distance Bound]\label{thm:general}
Let~$\mR$ be a recovery $G$-system and let $(\lambda_1,\ldots, \lambda_k)\in\Lambda(\mR,1)$. For each $i \in \{1,\ldots,k\}$, define $\ell_i:=\min\{\lambda_i,1\}$. Then
$$\sum_{i=1}^k \Bigl((\delta_i^2-1)(\lambda_i-\omega_i\ell_i) + (\delta_i^1-1)\, \omega_i \ell_i\Bigr)\leq n.$$

\end{theorem}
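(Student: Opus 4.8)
The plan is to combine the Total Capacity Bound of Lemma~\ref{tcb} with the refined size information on recovery sets coming from Proposition~\ref{prop:recoverysize}, exactly mirroring the structure of the proof of Theorem~\ref{thm:bounddualdis} but now tracking \emph{two} threshold sizes $\delta_i^1-1$ and $\delta_i^2-1$ instead of the single threshold $d^\perp-1$. First I would fix $(\lambda_1,\ldots,\lambda_k)\in\Lambda(\mR,1)$ and a feasible allocation $\{\lambda_{i,R}\}$, and for each $i$ split the recovery sets in $\mR_i$ into the (at most $\omega_i$) sets of the minimum size $\delta_i^1$ and all the rest, which by Proposition~\ref{prop:recoverysize} have size at least $\delta_i^2$. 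Write $\mu_i:=\sum_{R\in\mR_i,\,|R|=\delta_i^1}\lambda_{i,R}$ for the total rate the allocation routes through small recovery sets, so that $\sum_{R\in\mR_i}\lambda_{i,R}=\lambda_i$ forces the rate through the larger sets to be $\lambda_i-\mu_i$. Then
\begin{align*}
\sum_{i=1}^k\sum_{R\in\mR_i}|R|\,\lambda_{i,R}
&\ge \sum_{i=1}^k\Bigl(\delta_i^1\mu_i+\delta_i^2(\lambda_i-\mu_i)\Bigr)\\
&= \sum_{i=1}^k\Bigl(\delta_i^2\lambda_i-(\delta_i^2-\delta_i^1)\mu_i\Bigr),
\end{align*}
and by Lemma~\ref{tcb} the left-hand side is at most $n$.

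The remaining point is to bound $\mu_i$. Since $\delta_i^2\ge\delta_i^1$, the coefficient $(\delta_i^2-\delta_i^1)$ is nonnegative, so to get the cleanest inequality I want an \emph{upper} bound on $\mu_i$. Here the two constraints on the allocation interact: on one hand $\mu_i\le\sum_{R\in\mR_i}\lambda_{i,R}=\lambda_i$, and on the other hand each of the $\omega_i$ small recovery sets $R$ satisfies $i\in?$—no, more to the point, each such $R$ is a recovery set of size $\delta_i^1$ and the capacity constraint~\eqref{c2} at any server $j$ limits how much rate can pass through sets containing $j$; summing appropriately (or, more simply, observing that there are only $\omega_i$ such sets and this is the relevant combinatorial count behind the claim) one obtains $\mu_i\le\omega_i\ell_i$ where $\ell_i=\min\{\lambda_i,1\}$. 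Granting $\mu_i\le\omega_i\ell_i$, and using $\mu_i\le\lambda_i$ together with the nonnegativity of $\delta_i^2-\delta_i^1$, I substitute to get
\[
n\ge\sum_{i=1}^k\Bigl(\delta_i^2\lambda_i-(\delta_i^2-\delta_i^1)\,\omega_i\ell_i\Bigr)
=\sum_{i=1}^k\Bigl(\delta_i^2(\lambda_i-\omega_i\ell_i)+\delta_i^1\,\omega_i\ell_i\Bigr).
\]
Finally, to land on the stated form with $\delta_i^1-1$ and $\delta_i^2-1$, I note that $\delta_i^1,\delta_i^2\ge 2$ (as $G$ has no all-zero column, every relevant dual codeword has weight at least $2$), so I can peel off one unit from each size: writing $|R|=1+(|R|-1)$ and summing the leading $1$'s gives $\sum_i\sum_{R\in\mR_i}\lambda_{i,R}=\sum_i\lambda_i$, and the same peeling argument applied to the chain above yields
\[
\sum_{i=1}^k\Bigl((\delta_i^2-1)(\lambda_i-\omega_i\ell_i)+(\delta_i^1-1)\,\omega_i\ell_i\Bigr)+\sum_{i=1}^k\lambda_i\le n+\sum_{i=1}^k\mu_i\cdot 0,
\]
so, being careful, I should instead run the $-1$ bookkeeping from the start: replace $|R|$ by $|R|-1\ge\delta_i^1-1$ (resp. $\ge\delta_i^2-1$) everywhere, and note $\sum_i\sum_R(|R|-1)\lambda_{i,R}\le n-\sum_i\lambda_i$—but this is not what Lemma~\ref{tcb} gives, so the right move is to keep $|R|$, get $n\ge\sum_i(\delta_i^2\lambda_i-(\delta_i^2-\delta_i^1)\omega_i\ell_i)$, and then verify algebraically that this is equivalent to the claimed bound using $\lambda_i\ge\omega_i\ell_i$ is \emph{not} needed—rather, one checks directly that $\delta_i^2(\lambda_i-\omega_i\ell_i)+\delta_i^1\omega_i\ell_i \ge (\delta_i^2-1)(\lambda_i-\omega_i\ell_i)+(\delta_i^1-1)\omega_i\ell_i + \lambda_i$ would require $\lambda_i\le \lambda_i$, i.e. it is an equality, so the two displayed inequalities differ by exactly $\sum_i\lambda_i$ and hence the stated inequality is \emph{weaker}; therefore it suffices to prove $n\ge\sum_i(\delta_i^2\lambda_i-(\delta_i^2-\delta_i^1)\omega_i\ell_i)$ and drop the $\sum_i\lambda_i$, which is legitimate since $\lambda_i\ge 0$.

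The main obstacle, and the step I would scrutinize most carefully, is the bound $\mu_i\le\omega_i\ell_i$: it says that the total allocated rate carried by the $\omega_i$ minimum-size recovery sets for object $i$ cannot exceed $\omega_i\min\{\lambda_i,1\}$. The $\le\lambda_i$ part is immediate from~\eqref{c1} and~\eqref{c3}; the $\le\omega_i$ part should follow because each minimum-size recovery set $R$, together with the appended coordinate, corresponds to a single dual codeword (up to scalars), and the feasibility constraints—in particular that each such $R$ must ``pay'' capacity at its $\delta_i^1$ servers—cap the rate through each individual small set at $1$ when combined across the $\omega_i$ of them; making this precise is where I expect the real work to be, and I would look to Proposition~\ref{prop:recoverysize} and the definition of $\omega_i$ to pin down that there are exactly $\omega_i$ such sets and that they are the obstruction being exploited. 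Once $\mu_i\le\omega_i\ell_i$ is in hand, the rest is the routine rearrangement sketched above.
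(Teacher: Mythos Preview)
Your overall strategy matches the paper's exactly: start from Lemma~\ref{tcb}, split each $\mR_i$ into the minimum-size recovery sets and the rest, lower-bound the sizes of the latter via the definition of $\delta_i^2$, and then upper-bound the total rate routed through the small sets by $\omega_i\ell_i$. The execution, however, carries a systematic off-by-one: by Proposition~\ref{prop:recoverysize} the minimum recovery-set size is $\delta_i^1-1$, not $\delta_i^1$ (the dual codeword has weight $\delta_i^1$ and one of its support positions is the appended coordinate $n+1$), and likewise the next threshold is $\delta_i^2-1$. With your thresholds as written, your first displayed inequality
\[
\sum_{i=1}^k\sum_{R\in\mR_i}|R|\,\lambda_{i,R}\ \ge\ \sum_{i=1}^k\bigl(\delta_i^1\mu_i+\delta_i^2(\lambda_i-\mu_i)\bigr)
\]
is false in general. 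Once you shift both thresholds down by $1$, the chain yields directly
\[
n\ \ge\ \sum_{i=1}^k\bigl((\delta_i^2-1)\lambda_i-(\delta_i^2-\delta_i^1)\mu_i\bigr),
\]
which becomes the stated bound after substituting $\mu_i\le\omega_i\ell_i$. Your entire ``$-1$ peeling'' paragraph is then unnecessary; as you half-noticed, it only appeared to work because you were discarding the nonnegative quantity $\sum_i\lambda_i$ to compensate for the earlier overcount.

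The step you flagged as the main obstacle, $\mu_i\le\omega_i\ell_i$, is simpler than you suggest and is precisely what the paper does: there are at most $\omega_i$ sets $R\in\mR_i$ of size $\delta_i^1-1$, and for each of them $\lambda_{i,R}\le\ell_i$, since $\lambda_{i,R}\le\lambda_i$ by~\eqref{c1} and~\eqref{c3}, while $\lambda_{i,R}\le 1$ follows from~\eqref{c2} applied at any single server $j\in R$. Summing at most $\omega_i$ terms, each bounded by $\ell_i$, gives $\mu_i\le\omega_i\ell_i$; no interaction across the $\omega_i$ sets is required.
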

\begin{proof} 
Let $(\lambda_1,\ldots, \lambda_k)\in\Lambda(\mR,1)$ and let
 $\{\lambda_{i,R}\}$ be a feasible allocation. By Proposition~\ref{prop:recoverysize} and~\eqref{c1} we have that, for all $i \in \{1,\ldots,k\}$,
\begin{equation} \label{prel}
\sum_{\substack{R\in\mR_i\\|R|>\delta_i^1-1}}\lambda_{i,R}=\lambda_i- \sum_{\substack{R\in\mR_i\\|R|=\delta_i^1-1}}\lambda_{i,R},
\end{equation}
where the sum over an empty set is $0$.
Thus the LHS of~\eqref{new1} can be rewritten as 
\begin{align} \label{new2}
\sum_{i=1}^k\Bigl( (\delta_i^1-1)\sum_{\substack{R\in\mR_i \\|R|=\delta_i^1-1}}\hspace{-7pt}\lambda_{i,R}+ \sum_{\substack{R\in\mR_i \\|R|>\delta_i^1-1}} \hspace{-7pt} |R| \, \lambda_{i,R}  \Bigr).
\end{align}
Using~\eqref{prel}, Proposition~\ref{prop:recoverysize}, and the definition of $\delta_i^2$, we find that
the quantity in~\eqref{new2} is at least
\begin{align}
\allowdisplaybreaks
       &\sum_{i=1}^k\Bigl ((\delta_i^1-1)\sum_{\substack{R\in\mR_i \\|R|=\delta_i^1-1}}\hspace{-7pt}\lambda_{i,R}+ (\delta_i^2-1)\sum_{\substack{R\in\mR_i \\|R|>\delta_i^1-1}} \hspace{-7pt} \lambda_{i,R}  \Bigr)    \nonumber\\
     =&\sum_{i=1}^k\Bigl((\delta_i^2-1)\lambda_i - (\delta_i^2-\delta_i^1)\sum_{\substack{R\in\mR_i\\|R|=\delta_i^1-1}}\hspace{-7pt}\lambda_{i,R}\Bigr)  \nonumber\\
     \geq&\sum_{i=1}^k\Bigl((\delta_i^2-1)\lambda_i - (\delta_i^2-\delta_i^1)\omega_i\ell_i\Bigr),\label{new3}
\end{align}
where the latter inequality follows from the fact that $\delta_i^2-\delta_i^1 \ge 0$ and $\lambda_{i,R} \le \ell_i$
for all $i \in \{1,\ldots,k\}$.
Finally, the last line of~\eqref{new3}
is equal to $$\sum_{i=1}^k \Bigl((\delta_i^2-1)\Bigl(\lambda_i-\omega_i\ell_i\Bigr) + (\delta_i^1-1)\omega_i\ell_i\Bigr),$$
which with~\eqref{new1}~and~\eqref{new2} concludes the proof.
\end{proof}

We illustrate the previous result with two examples where $G$ is not systematic.

\begin{example}[First-Order Reed-Muller Code]\label{ex:reed_muller}
Consider the following non-systematic generator matrix $G\in\F_2^{4\times 8}$ of the first order Reed-Muller code, namely
$$ G:=\begin{pmatrix}
1 & 1 & 1 & 1 & 0 & 0 & 0 & 0 \\
1 & 0 & 1 & 0 & 1 & 0 & 1 & 0 \\
1 & 1 & 0 & 0 & 1 & 1 & 0 & 0 \\
1 & 1 & 1 & 1 & 1 & 1 & 1 & 1 
\end{pmatrix}.$$

Consider the codes generated by $G_i$, for $i\in\{1,2,3,4\}$. We have that $\delta_1^1 = \delta_2^1=\delta_3^1=3$, $\omega_1=\omega_2=\omega_3=4$, $\delta_1^2=\delta_2^2=\delta_2^2=5$,
$\delta_4^1=2$, $\omega_4=1$ and $\delta_4^2=4$. Thus Thm~\ref{thm:general} reads,
with $\ell_i=\min\{\lambda_i,1\}$ for all~$i$,
\begin{align}\label{eq:reedMuller_our_bound}
    4\lambda_1+4\lambda_2+4\lambda_3 -8\ell_1 - 8\ell_2-8\ell_3+3\lambda_4-2\ell_4\leq 8.
\end{align}
The bound is not sharp in this case. Indeed, suppose that $\lambda_2=\lambda_3=0$. 
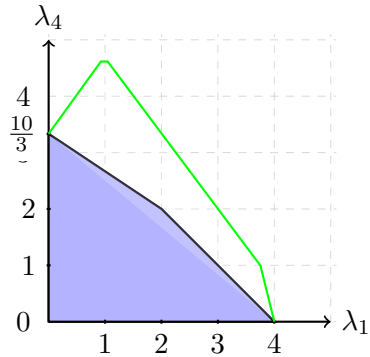
\begin{figure}[hbt]
\begin{center}
\begin{tikzpicture}[ thick,scale=0.75]
\draw[help lines, color=gray!30, dashed] (-.01,-.01) grid (5.1,5.1);
\draw[->, thick,black] (0,0)--(5,0) node[right]{$\lambda_1$};
\draw[->, thick,black] (0,0)--(0,5) node[above]{$\lambda_4$};
\draw[thick,blue] (0,3.3333)--(2,2)--(4,0);
\node[circle,scale=.5,label=left:$0$] (O) at (0,0) {};
\node[circle,scale=.5,label=below:$1$] (x1) at (1,0) {};
\node[circle,scale=.5,label=below:$2$] (x2) at (2,0) {};
\node[circle,scale=.5,label=below:$3$] (x3) at (3,0) {};
\node[circle,scale=.5,label=below:$4$] (x4) at (4,0) {};
\node[circle,scale=.5,label=left:$1$] (y1) at (0,1) {};
\node[circle,scale=.5,label=left:$2$] (y2) at (0,2) {};
\node[circle,scale=.5,label=left:$3$] (y3) at (0,3) {};
\node[circle,scale=.5,label=left:$4$] (y4) at (0,4) {};
\draw[fill=blue!30]  (0,0) -- (4,0) -- (2,2) -- (0,3.3333)  -- cycle;
\foreach \x/\xtext in {1, 2, 3, 4}
   \draw (\x cm,1pt) -- (\x cm,-1pt) node[anchor=north,fill=white] {$\xtext$};
 \foreach \y/\ytext in {1,2, 3.33/\frac{10}{3}}
   \draw (1pt,\y cm) -- (-1pt,\y cm) node[anchor=east,fill=white] {$\ytext$};
    \filldraw[thick,draw=green,fill=white,fill opacity=0.2] (0,10/3) -- (0.9271,4.616) -- (1.046,4.616) -- (3.754,0.9955) -- (4,0);
\end{tikzpicture}
\end{center}
\caption{Service rate region for the $G$-system in Example~\ref{ex:reed_muller}, with $\lambda_2=\lambda_3=0$. Eq.\ \eqref{eq:reedMuller_our_bound} is represented by the green line.}
\end{figure}

\end{example}

\begin{example}
Consider the finite field $\F_8=\F_2[\alpha]$, where 
$\alpha^3+\alpha+1=0$.  Let 
$$ G:=\begin{pmatrix}
1 & \alpha & \alpha^4 &\alpha & \alpha^3 \\
\alpha & \alpha^3 & \alpha &\alpha^6 & \alpha^5 \\
\alpha^2 & \alpha & \alpha^4 & \alpha^2 & \alpha^3
\end{pmatrix}\in\F_8^{3\times 5}.$$
We have $d^\perp=2$. Moreover, the dual distances of the codes generated by $G_i$, for $i\in\{1,2,3\}$, are all equal to $d^\perp$.
Further, we have $\delta_1^1=3$, $\omega_1=1$, $\delta_1^2=4$,
$\delta_2^1=3$, $\omega_2=2$, $\delta_2^2=4$, $\delta_3^1=4$, $\omega_3=7$, and $\delta_3^2=5$. Thus Thm~\ref{thm:general} reads,
with $\ell_i=\min\{\lambda_i,1\}$ for all~$i$,
\begin{align*}
    3\lambda_1+3\lambda_3+4\lambda_3 -\ell_1 - 2\ell_2-7\ell_3\leq 5.
\end{align*}
\end{example}

We conclude this section by showing that Theorem~\ref{thm:general} implies
the First Dual Distance Bound of Theorem~\ref{thm:bounddualdis}
when $G$ is systematic and $d^\perp \ge 3$.

\begin{proposition}
Suppose that $G$ is systematic and that
$d^\perp \ge 3$. Then for all $i \in \{1,\ldots,k\}$ we have $\delta_i^1=2$,  $\omega_i=1$, and $\delta_i^2 \ge d^\perp$.
\end{proposition}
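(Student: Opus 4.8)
The plan is to analyze the dual code $\mC_i^\perp$ of $\mC_i = \langle G_i\rangle$, where $G_i = (G \mid e_i^\top)$, under the assumption that $G$ is systematic with $d^\perp \ge 3$. First I would pin down a convenient parity-check description. Since $G$ is systematic, write $G = (\Id_k \mid P)$ for some $P \in \F_q^{k\times(n-k)}$; then $\mC^\perp$ is generated by $H = (-P^\top \mid \Id_{n-k})$, and every row of $H$ has support containing exactly one of the first $k$ coordinates. The key observation is that appending the column $e_i^\top$ to $G$ makes the $i$th coordinate "redundant" from the point of view of the dual: the vector $v := e_i - e_{n+1} \in \F_q^{n+1}$ (i.e.\ $+1$ in position $i$, $-1$ in position $n+1$, zeros elsewhere) satisfies $G_i v^\top = G^i - e_i = 0$, so $v \in \mC_i^\perp$. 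This vector has weight $2$ and contains coordinate $n+1$ in its support, which immediately gives $\delta_i^1 \le 2$; combined with Proposition~\ref{prop:recoverysize} (or simply the fact that $G_i$ has no zero column, so $d(\mC_i^\perp)\ge 2$, hence any codeword through $n+1$ has weight $\ge 2$), we get $\delta_i^1 = 2$.

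Next I would show $\omega_i = 1$, i.e.\ that up to scalars $v$ is the \emph{only} weight-$2$ codeword of $\mC_i^\perp$ whose support contains $n+1$. Suppose $x \in \mC_i^\perp$ has $\wt(x) = 2$ and $n+1 \in \sigma(x)$, say $\sigma(x) = \{j, n+1\}$ with $x_j = a \ne 0$, $x_{n+1} = b \ne 0$. The condition $G_i x^\top = 0$ reads $a\, G^j + b\, e_i = 0$, i.e.\ $G^j = -(b/a) e_i$. If $j \le k$ this forces $j = i$ (and the relation is automatically satisfied, recovering $v$); if $j > k$ it would say the $j$th column of $G$ is a nonzero multiple of $e_i$, so the corresponding row of the parity-check matrix $H$ — actually more directly: then $\{j\}$-deleted, the codeword $e_i - (a/b)\cdot(\text{something})$ — would produce a weight-$2$ codeword of $\mC^\perp$ supported on $\{i,j\}\subseteq\{1,\dots,n\}$, contradicting $d^\perp \ge 3$. (Concretely, $G^j = c\,e_i$ with $c\ne 0$ means $e_i = c^{-1}G^j$, so $c^{-1}G^j - G^i$... rather: the vector with $c^{-1}$ in position $j$ and $-1$ in position $i$ lies in $\mC^\perp$ and has weight $2$.) Hence the only possibility is $j = i$, so $\omega_i = 1$.

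Finally, for $\delta_i^2 \ge d^\perp$ I would take any $x \in \mC_i^\perp$ with $n+1 \in \sigma(x)$ and $\wt(x) > \delta_i^1 = 2$, and show $\wt(x) \ge d^\perp$. Restrict to the first $n$ coordinates: let $x' \in \F_q^n$ be $x$ with the last coordinate deleted, and let $x_{n+1} = b \ne 0$. Then $G_i x^\top = 0$ gives $G x'^\top + b\, e_i = 0$, i.e.\ $G x'^\top = -b\,e_i$. Now consider $y := x' + b\, e_i \in \F_q^n$ (adding $b$ into the $i$th coordinate, which is legitimate since the first $k$ columns of $G$ are standard basis vectors): then $G y^\top = G x'^\top + b\, G^i = G x'^\top + b\, e_i = 0$, so $y \in \mC^\perp$. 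If $y \ne 0$ then $\wt(y) \ge d^\perp$, and since $y$ and $x'$ differ only in coordinate $i$, we get $\wt(x) = \wt(x') + 1 \ge \wt(y) - 1 + 1 = \wt(y) \ge d^\perp$ — I would need to check the bookkeeping on coordinate $i$ carefully (the worst case is when $x'_i \ne 0$ but $y_i = 0$, giving $\wt(x') = \wt(y) + 1$, so still $\wt(x) = \wt(x')+1 \ge d^\perp + 2$; and when $x'_i = 0$ but $y_i \ne 0$ we get $\wt(x') = \wt(y) - 1$, so $\wt(x) = \wt(y) \ge d^\perp$). The only remaining case is $y = 0$, i.e.\ $x' = -b\,e_i$, which makes $\wt(x) = 2 = \delta_i^1$, contradicting $\wt(x) > \delta_i^1$. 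This establishes $\delta_i^2 \ge d^\perp$ and completes the proof.

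The main obstacle I anticipate is the careful case analysis on coordinate $i$ in the last step — the weight can change by $\pm 1$ when passing between $x'$ and $y$ depending on whether $x'_i$ is zero, and one must verify that in every case the inequality $\wt(x) \ge d^\perp$ survives. Everything else is a direct translation between the kernel condition $G_i x^\top = 0$ and membership in $\mC^\perp$, exploiting the systematic form of $G$.
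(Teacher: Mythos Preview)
Your proof is correct and follows essentially the same route as the paper: identify the weight-$2$ dual codeword $v=e_i-e_{n+1}$, argue uniqueness up to scalars to get $\omega_i=1$, and then for $\delta_i^2$ ``fold'' coordinate $n+1$ back into coordinate $i$ to land in $\mC^\perp$ (your $y=x'+b\,e_i$ is exactly the paper's auxiliary vector $z=(y_1+y_{n+1},y_2,\ldots,y_n)$). One small point you skipped: by the paper's convention $\delta_i^2=\delta_i^1$ when $\gamma_i=1$, so before analysing a codeword of weight $>2$ through $n+1$ you should check that such a codeword exists---the paper does this by taking any nonzero $c\in\mC^\perp$ and noting that $(c,0)+v$ has weight $\ge d^\perp\ge 3$; without this step the conclusion $\delta_i^2\ge d^\perp$ is not yet justified.
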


\begin{proof}
We only prove the result for $i=1$; the proof for the other indices is identical.
We denote by $\pi:\F_q^{n+1} \to \F_q^n$ the projection onto the first $n$ coordinates.
That $\delta_1^1=2$ easily follows from the definitions, since
the $1$st and the $(n+1)$th columns of $G_1$ are equal.

We claim that $(1,0,\ldots,0,-1) \in \F_q^{n+1}$ is the only codeword 
$x \in \mC_1^\perp$ (up to multiples) with $n+1 \in \sigma(x)$ and Hamming weight 2.
Let $x=(1,0,\ldots,0,-1)$ and 
suppose towards a contradiction that there exists $y \in \mC_i^\perp$ with
$y_{n+1}=1$, Hamming weight 2, and linearly independent from $x$.
Then $\pi(x+y) \in \mC^\perp$ is non-zero and has weight at most~2, contradicting $d^\perp \ge 3$.
This shows that $\omega_1=1$.

It remains to show that $\delta_1^2 \ge d^\perp$. Let $x \in \mC_1^\perp$ be the codeword defined above.

\begin{itemize}[leftmargin=*]
    \item We start by showing that $\delta_1^2 \ge 3$.
Take $y \in \mC^\perp$
with $\wt(y)=d^\perp \ge 3$.
Then $(y,0)+x \in \mC_i^\perp$ has weight at least $d^\perp \ge 3$.
Not all codewords of $\mC_1^\perp$ with $n+1$ in their support have weight 2. Therefore $\delta_1^2 \ge 3$ by definition of $\delta_1^2$.

\item Let $y\in\mC_1^\perp$ be a codeword with $\wt(y)=\delta_1^2$ and $n+1\in\sigma(y)$. By the previous item, $\wt(y) \ge 3$.

If $1\not\in\sigma(y)$, then  $\pi(x+y) \in \mC$ is a non-zero codeword with the same weight as $y$. Therefore $\delta_i^2=\wt(y) \ge d^\perp$. On the other hand, if $1\in\sigma(y)$ then  $z:=(y_1+y_{n+1},y_2,\ldots,y_n) \in \mC^\perp$. Since $\wt(y) \ge 3$ we have $z \neq 0$. Therefore
$d^\perp \le \wt(z) \le \wt(y) = \delta_1^2$. \qedhere
\end{itemize}
\end{proof}

Observe that when $G$ is systematic and $d^\perp=2$,
Theorem~\ref{thm:general} can be sharper than Theorem~\ref{thm:bounddualdis},
since it takes into account possibly finer information.

\begin{example}
Let $$ G:=\begin{pmatrix}
1 & 0 & 0 & 3 & 5 \\
0 & 1 & 0 & 0 & 1 \\
0 & 0 & 1 & 0 & 3
\end{pmatrix}\in \F_7^{3\times 5}.$$ 
The bound of Theorem \ref{thm:bounddualdis} reads
$\lambda_1+\lambda_2+\lambda_3\leq 5$,
while the one of
Theorem~\ref{thm:general} reads
$2\lambda_1+3\lambda_2 +3\lambda_3 -2\sum_{i=1}^3\min\{\lambda_i,1\}\leq 5.$
It can be shown that the latter region is strictly 
contained in the former.
\end{example}

The example illustrates that the service rate region of a $G$-system depends on $G$. How invertible operations on $G$ change the region is an open problem.

\section{Conclusions and Future Work}
\label{sec:v}
The service rate region is a new aspect of linear redundancy schemes, which measures their ability to provide simultaneous data recovery. The problem was studied in different frameworks. We initiates a study within the classical coding theory. This approach enabled us to derive some new results and recover some previously known bounds on the service rate region straightforwardly.
A liner redundancy system is defined by a matrix akin to a code generator matrix.
We believe that coding theory can be instrumental in identifying and using the matrix's structural properties to construct polytopes that contain the service rate region by establishing a series of outer bounds that, when combined, determine the region.


\bigskip

\bibliographystyle{IEEEtran}
\bibliography{serviceL}

\end{document}